\documentclass[10pt,conference]{IEEEtran}

\usepackage{latexsym}
\usepackage{bbold}
\usepackage{amsmath}
\usepackage{amssymb}
\usepackage{amsthm}
\usepackage{color}
\usepackage{amsmath}
\usepackage{bbm}
\usepackage{epsfig}
\usepackage{hyperref}
\usepackage{enumerate}

\newtheorem{theorem}{Theorem}

\newtheorem{definition}[theorem]{Definition}

\newcommand{\kk}[1] {{\bf #1}}
\begin{document}

\title{Allocations for Heterogenous Distributed Storage}

\author{\IEEEauthorblockN{Vasileios Ntranos}
\IEEEauthorblockA{University of Southern California \\ 
Los Angeles, CA 90089, USA \\
ntranos@usc.edu}
\and
\IEEEauthorblockN{Giuseppe Caire}
\IEEEauthorblockA{University of Southern California \\ 
Los Angeles, CA 90089, USA \\
caire@usc.edu}
\and
\IEEEauthorblockN{Alexandros G. Dimakis}
\IEEEauthorblockA{University of Southern California \\ 
Los Angeles, CA 90089, USA \\
dimakis@usc.edu}
}

\maketitle

\begin{abstract}
We study the problem of storing a data object in a set of data nodes that fail independently with given probabilities. Our problem is a natural generalization of 
a homogenous storage allocation problem where all the nodes had the same reliability and is naturally motivated for peer-to-peer and cloud storage systems with different types of nodes. 
Assuming optimal erasure coding (MDS), the goal is to find a storage allocation (i.e, how much to store in each node) to maximize the probability of successful recovery. 
This problem turns out to be a challenging combinatorial optimization problem. 
In this work we introduce an approximation framework based on large deviation inequalities and convex optimization. We propose two approximation algorithms and study the asymptotic performance of the resulting allocations.
SUBMITTED TO ISIT 2012.
\end{abstract}

\section{Introduction}
We are interested in heterogenous storage systems where storage nodes have different reliability parameters. This problem is relevant for heterogenous peer-to-peer storage networks and cloud storage systems that use multiple types of storage devices, \textit{e.g.} solid state drives along with standard hard disks. 
We model this problem by considering $n$ storage nodes and a data collector that accesses a random subset $\kk r$ of them. 
The probability distribution of $\kk r \subseteq \{1, \dots, n\}$ models random node failures and we assume that node $i$ fails independently with probability $1-p_{i}$.
The probability of a set $\kk r$ of nodes being accessed is therefore: 
\begin{equation}
{\mathbb P}(\kk r) = \prod_{i\in \kk r} p_{i}\prod_{j\notin \kk r}{(1-p_{j})}.
\end{equation}

Assume now that we have a single data file of unit size that we wish to code and store over these nodes to {\it maximize the  probability of recovery} after a random set of nodes fail.  
The problem becomes trivial if we do not put a constraint on the maximum size $T$ of coded data 
and hence, we will work with a maximum storage budget of size $T<n$: If $x_{i}$ is the amount of coded data stored in node $i$, then $\sum_{i=1}^{n}x_{i} \leq T$.
We further assume that our file is {\it optimally coded}, in the sense that
successful recovery occurs whenever the total amount of data accessed by the data collector is at least the size of the original file. This is possible in practice when we use Maximum Distance Separable (MDS) codes \cite{alloc}. 
The probability of successful recovery for an allocation $(x_{1}, \dots, x_{n})$ can be written as 
$$P_{s} = {\mathbb P} \left[ \sum_{i \in \kk r} x_{i} \geq 1 \right] = 
\sum_{\kk r \subseteq \{1, \dots, n\}}{\mathbb P}(\kk r)\;{\mbox{\Large $\mathbbm{1}$}}\Big\{\sum_{i\in \kk r} x_{i}\geq 1\Big\}$$
where ${\mbox{$\mathbbm{1}$}}\{\cdot\}$ is the indicator function. 
${\mbox{$\mathbbm{1}$}}\{S\} = 1$ if the statement $S$ is true and zero otherwise.

A more concrete way to see this problem is by introducing a $Y_{i}$ $\sim$ Bernoulli($p_{i}$) random variable for 
each storage node: $Y_{i}=1$ when node $i$ is accessed by the data collector and $Y_{i}=0$ when node $i$ has failed.  
Define the random variable 
\begin{equation}
Z = \sum_{i=1}^{n}x_{i}Y_{i}
\end{equation} 
where $x_{i}$ is the amount of data stored in node ${i}$. 
Then, obviously, we have $P_{s}={\mathbb P}[Z \geq 1]$.

Our goal is to find a storage allocation $(x_{1}, \dots, x_{n})$, that maximizes the probability of successful recovery, or equivalently,
minimizes the probability of failure, ${\mathbb P}[Z < 1]$. 
\section{Optimization Problem} 
 
Put in optimization form, we would like to find a solution to the following problem.
\begin{equation*}
\begin{aligned}
Q1: \;\;\;\;\;\;\;\;\;& \underset{\displaystyle x_{i}}{\text{minimize}} 
&  & \sum_{\kk r \subseteq \{1, \dots, n\}}{\mathbb P}(\kk r)\;{\mbox{\Large $\mathbbm{1}$}}\Big\{\sum_{i\in \kk r} x_{i} < 1\Big\} \\
& \text{subject to:}
& & \;\;\;\;\sum_{i=1}^{n}x_{i} \leq T  \\
& & &\;\;\;\; x_{i} \geq 0 , \; i = 1, \ldots, n.
\end{aligned}
\end{equation*}

Authors in \cite{alloc} consider a special case of problem $Q1$ in which $p_{i}=p$, $\forall i$. 
Even in this symmetric case the problem appears to be very difficult to solve due to its 
non-convex and combinatorial nature. 
In fact, even for a given allocation $\{x_i\}$ and parameter $p$, computing the objective function is computationally intractable ($\#P$-$hard$ , See \cite{alloc}).

%
A very interesting observation about this problem follows directly from Markov's Inequality: 
${\mathbb P}[Z \geq 1] \leq E[Z] = pT$. If $pT<1$, then the probability of successful recovery is 
bounded away from 1. This has motivated the definition of a region of parameters for which high probability of recovery
is possible: $R_{HP}=\{(p,T): pT\geq 1\}$. The budget $T$ should be more than $1/p$ if we want to aim for high 
reliability and the authors in \cite{alloc} showed that in the above region of parameters, {\it maximally spreading} the
budget to all nodes (i.e, $x_{i}=T/n$, $\forall i$) is an asymptotically optimal allocation as $n\rightarrow \infty$.

In the general case, when the node access probabilities, $p_{i}$, are not equal,
one could follow similar steps to characterize a region of high probability of recovery. Markov's Inequality yields:
$${\mathbb P}[Z \geq 1] \leq E[Z] = \sum_{i=1}^{n} x_{i} p_{i} = \kk p^{T} \kk x$$
where $\kk p = [p_{1}, p_{2}, \dots, p_{n}]^{T}$  and $\kk x = [x_{1}, x_{2}, \dots, x_{n}]^{T}$. 
If we don't want ${\mathbb P}[Z \geq 1]$ to be bounded away from $1$ we have to require now that $\kk p^{T} \kk x \geq 1$.
We see that in this case, high reliability is not a matter of sufficient budget, as it depends on the allocation $\bf x$ itself. 

Let $S(\kk p, T)=\left\{ \kk x \in {\mathbb R}_{+}^{n}\; :\; \kk p^{T} \kk x \geq 1 , \kk 1^{T} \kk x \leq T \right\}$ be the set of all allocations $\kk x$ with a given budget constraint $T$ that
satisfy $\kk p^{T} \kk x \geq 1$ for a given  $\bf p$. We call these allocations {\it reliable} for a system with parameters $\kk p$, $T$, in the sense that the resulting probability of successful recovery is not bounded away from 1.
Then the {region of high probability of recovery} can  be defined 
as the region of parameters $\kk p$, $T$, such that the set $S(\kk p, T)$ is non-empty.
$${\cal R}_{HP} = \left\{(\kk p,T) \in {\mathbb R}^{n+1}_{+}\;:\;  S(\kk p, T) \neq \emptyset  \right\}$$

This generalizes the region described in \cite{alloc}.
If all $p_{i}$'s are equal then the set $S(\kk p, T)$ is non-empty when $\kk p^{T} \kk x = pT \geq 1$. 
In the general case, the minimum budget such that $S(\kk p, T)$ is non-empty is $T = 1/p_{max}$,
with $p_{max}=\max\{p_{i}\}$, and
$S(\kk p, 1/p_{max})$ contains only one allocation $\kk x_{p^{-1}_{max}}: x_{j} = \frac{1}{p_{max}}\,,\; j = arg \, \max_{i}\{p_{i}\},\,
x_{i} = 0 \,,\; \forall i\neq j$. 

Even though ${\cal R}_{HP}$ provides a lower bound on the minimum budget $T$ required 
to allocate for high reliability, it doesn't provide any insights on how to design allocations
that achieve high probability of recovery in a distributed storage system. This motivates us to move one step further and 
define a region of $\epsilon$-optimal allocations in the next section.

\section{The region of $\epsilon$-optimal allocations}

We say that an allocation $(x_{1}, x_{2}, \dots, x_{n})$ is $\epsilon$-optimal if the corresponding probability of 
successful recovery, $\mathbb{P}[Z \geq 1]$, is greater than $1-\epsilon$. 

Let ${\cal E}_{n}(\kk p, T, \epsilon) = \{\;\kk x \in {\mathbb R}_{+}^{n}\; 
:\; \mathbb{P}[Z < 1] \leq \epsilon \;
 ,\;\; \kk 1^{T} \kk x \leq T \;\}$ be the set of all $\epsilon$-optimal allocations. Note that 
 if we could {\it efficiently} characterize this set for all problem parameters, we would be able to 
 solve problem $Q1$ exactly: Find the smallest $\epsilon$ such that ${\cal E}_{n}(\kk p, T, \epsilon)$ is non-empty.
 
In this section we will derive a sufficient condition for an allocation to be $\epsilon$-optimal
and provide an efficient characterization for a region ${\cal H}_{n}\subseteq{\cal E}_{n}(\kk p, T, \epsilon)$.
We begin with a very useful lemma.

\newtheorem{thm2}{Theorem}
\newtheorem{lemma1}[thm2]{Lemma} 

\begin{lemma1}{(Hoeffding's Inequality \cite{Hoeffding,mitzbook}})\\
Consider the random variable $W = \sum_{i=1}^{n} V_{i}$, where $V_{i}$ are independent almost surely bounded random variables with 
${\mathbb P}\left(V_{i}\in [a_{i},b_{i}]\right)=1$.  
Then,
$${\mathbb P}\Big[W \leq E[W] - n\delta\Big] \leq exp\left\{-\frac{2n^{2}\delta^{2}}{\sum_{i=1}^{n}(b_{i}-a_{i})^{2}}\right\}$$
for any $\delta>0$.
\label{lem1}
\end{lemma1}

We can use Lemma \ref{lem1} to upper bound the probability of failure, $\mathbb{P}[Z<1]\leq\mathbb{P}[Z\leq1]$, for an arbitrary allocation, 
since $Z=\sum_{i=1}^{n}x_{i}Y_{i}$ can be seen as the sum of $n$ independent almost surely bounded random variables $V_{i} = x_{i}Y_{i}$, with
${\mathbb P}\big(V_{i}\in [0,x_{i}]\big)=1$. 
Let $\delta = \big(\sum_{i=1}^{n}x_{i}p_{i}-1\big)/n$ and require $\delta > 0 \Leftrightarrow  \sum_{i=1}^{n}x_{i}p_{i}>1$.  
Lemma \ref{lem1} yields: 

\begin{equation}
\begin{aligned}
\mathbb{P}[Z<1] &\leq   exp\left\{-\frac{2\big(\sum_{i=1}^{n}x_{i}p_{i}-1\big)^{2}}{\sum_{i=1}^{n}x_{i}^{2}}\right\} ,\;\;\; \sum_{i=1}^{n}x_{i}p_{i}>1.\\
\end{aligned}
\label{bound}
\end{equation}
Notice that the constraint  $\sum_{i=1}^{n}x_{i}p_{i} > 1$ requires the allocation $(x_{1}, x_{2}, \dots, x_{n})$ 
to be reliable and $S(\kk p, T)\neq \emptyset$.  

In view of the above, a sufficient condition for a strictly reliable allocation to be $\epsilon$-optimal is the following.

\begin{equation}
\begin{aligned}
exp\left\{-\frac{2\big(\sum_{i=1}^{n}x_{i}p_{i}-1\big)^{2}}{\sum_{i=1}^{n}x_{i}^{2}}\right\} \leq \epsilon \;\;\;\Longleftrightarrow\;\;\;\\
{||\kk x||}_{2}\sqrt{\frac{\ln1/\epsilon}{2}\;} \leq \kk p^{T}\kk x - 1 \,, \;\;\;\;\;\; \kk p^{T}\kk x > 1
\end{aligned}
\label{suffcond}
\end{equation}

We say that all allocations satisfying the above equation are {\it Hoeffding $\epsilon$-optimal}, due to
the use of Hoeffding's Inequality in Lemma \ref{lem1}.

\begin{definition}{``The Region of Hoeffding $\epsilon$-optimal allocations''}

\begin{equation}
\begin{aligned}
{\cal H}_{n}(\kk p, T, \epsilon) =   
\Bigg\{& \kk x \in {\mathbb R}^{n}_{+} \;:\;  \kk p^{T}\kk x > 1 ,\; \kk 1^{T}\kk x \leq T ,\\ & \left. \;
{||\kk x||}_{2}\sqrt{\frac{\ln1/\epsilon}{2}\;} \leq \kk p^{T}\kk x - 1\right\}
\end{aligned}
\end{equation}

\end{definition}

The above region is strictly smaller ${\cal E}_{n}( \kk p, T, \epsilon)$ for any finite $n$, 
because the bound in (\ref{bound}) is not generally tight. 
However,  ${\cal H}_{n}( \kk p, T, \epsilon)$ is a convex set:
Equation (\ref{suffcond}) can be seen as a second order cone constraint on the allocation $\kk x \in {\mathbb R}^{n}_{+}$. 

\newtheorem{tm}{Theorem}

\begin{tm}
The region of Hoeffding $\epsilon$-optimal allocations ${\cal H}_{n}(\kk p, T, \epsilon)$ is convex in \kk x.
\end{tm}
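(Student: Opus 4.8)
The plan is to realize ${\cal H}_{n}(\kk p, T, \epsilon)$ as a finite intersection of convex sets, using the elementary fact that an intersection of convex sets is convex. Writing out the definition, ${\cal H}_{n}(\kk p, T, \epsilon)$ is the intersection of the open halfspace $\{\kk x : \kk p^{T}\kk x > 1\}$, the closed halfspace $\{\kk x : \kk 1^{T}\kk x \leq T\}$, the nonnegative orthant ${\mathbb R}^{n}_{+}$ (itself the intersection of the $n$ halfspaces $\{x_{i}\geq 0\}$), and the set
\begin{equation*}
{\cal C} = \Big\{\, \kk x \in {\mathbb R}^{n} \;:\; {||\kk x||}_{2}\sqrt{\frac{\ln 1/\epsilon}{2}} \;\leq\; \kk p^{T}\kk x - 1 \,\Big\}.
\end{equation*}
Halfspaces and orthants are convex, so the whole argument reduces to showing that ${\cal C}$ is convex. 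Here we take $0<\epsilon<1$, so that $\ln 1/\epsilon > 0$ and the left-hand side is a positive multiple of a norm (for $\epsilon \geq 1$ the region is either trivial or vacuous and there is nothing to prove).

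To see that ${\cal C}$ is convex, I would introduce $g:{\mathbb R}^{n}\to{\mathbb R}$ defined by $g(\kk x) = \sqrt{\ln 1/\epsilon / 2}\;{||\kk x||}_{2} - \kk p^{T}\kk x + 1$, and observe that ${\cal C} = \{\kk x : g(\kk x)\leq 0\}$ is a sublevel set of $g$. The Euclidean norm ${||\cdot||}_{2}$ is convex, since the triangle inequality and homogeneity give ${||\lambda \kk x + (1-\lambda)\kk y||}_{2} \leq \lambda{||\kk x||}_{2} + (1-\lambda){||\kk y||}_{2}$ for $\lambda\in[0,1]$; multiplying by the nonnegative constant $\sqrt{\ln 1/\epsilon / 2}$ preserves convexity; and adding the affine function $\kk x\mapsto -\kk p^{T}\kk x + 1$ again preserves convexity. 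Hence $g$ is convex, and every sublevel set of a convex function is convex, so ${\cal C}$ is convex. Equivalently, one may simply cite the standard fact that a second-order cone constraint ${||A\kk x+\kk b||}_{2}\leq \kk c^{T}\kk x+d$ defines a convex set, which is exactly constraint (\ref{suffcond}) with $A=\sqrt{\ln 1/\epsilon / 2}\,I$, $\kk b = 0$, $\kk c = \kk p$, and $d=-1$.

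Putting the pieces together, ${\cal H}_{n}(\kk p, T, \epsilon)$ is the intersection of ${\cal C}$ with three convex sets and is therefore convex, which is the claim. I do not expect any real obstacle here: the substance is the reformulation, already noted just before the statement, of the Hoeffding sufficient condition as the second-order cone constraint in (\ref{suffcond}); once that is in hand, the proof is just careful bookkeeping of the operations (nonnegative scaling, addition of affine terms, passage to a sublevel set, finite intersection) under which convexity is preserved.
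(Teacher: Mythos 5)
Your proof is correct and follows essentially the same route as the paper, which simply observes that the defining inequality is a second-order cone constraint intersected with the halfspaces $\kk p^{T}\kk x > 1$, $\kk 1^{T}\kk x \leq T$ and the nonnegative orthant; you have merely spelled out the standard convexity bookkeeping (norm plus affine function, sublevel set, finite intersection) that the paper leaves implicit.
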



This interesting result allows us to formulate and efficiently solve optimization problems over ${\cal H}_{n}(\kk p, T, \epsilon)$.
Finding the smallest $\epsilon^{*}$ such that ${\cal H}_{n}(\kk p, T, \epsilon)$ is non-empty will produce an $\epsilon^{*}$-optimal solution to 
problem $Q1$.

\subsection{Hoeffding Approximation of $Q1$}
If we fix $\kk p, T, n$ as the problem parameters, then the following optimization problem 
can be solved efficiently, to any desired accuracy $1/\alpha$, by solving a sequence of ${\cal O}(\log\alpha)$ convex feasibility problems 
(bisection on $\epsilon$).

\begin{equation*}
\begin{aligned}
&H1: \;\;\;\;\;\;\;\;&  &\underset{  \kk x, \epsilon }{\min} & & \epsilon & \\
& &&\text{s.t.:}& &\kk x \in {\cal H}_{n}(\kk p, T, \epsilon)&\\
\end{aligned}
\end{equation*}

We will see next that if $T$ is sufficiently large, $\epsilon^{*}$ goes to zero exponentially fast as $n$ grows, and hence the solution to the aforementioned problem is asymptotically optimal.


\subsection{Maximal Spreading Allocations and the Asymptotic Optimality of H1}

First, we will focus on maximal spreading allocations, ${\bf x}_{T}^{n} \triangleq \{{\bf x} \in {\mathbb R}^{n} : x_{i} = T/n\,\}$, and 
derive their asymptotic optimality for $Q1$, in the sense that ${\mathbb P}[Z<1] \rightarrow 0$, as $n\rightarrow \infty$. 
Let ${\bar p} = \frac{1}{n}\sum_{i=1}^{n}p_{i}$ be the average access probability across all nodes. We have the following lemma.

\begin{lemma1}
If $T > 1/ \bar  p$,\, for any $\epsilon > 0$, $\exists n_{\epsilon}$: ${\bf x}_{T}^{n} \in {\cal H}_{n}(\kk p, T, \epsilon)$, for all $n\geq n_{\epsilon}$.
\label{thm2}
\end{lemma1}

\begin{proof}
This follows directly from the definition of ${\cal H}_{n}(\kk p, T, \epsilon)$:
 $n_{\epsilon} = {{\frac{\ln1/\epsilon}{2(\bar p - 1/T)^{2}}\;}}$. 
%
%
\end{proof}

The above lemma establishes the asymptotic optimality of maximal spreading allocations through the following corollary.

\newtheorem{thm21}{Theorem}
\newtheorem{cor}[thm21]{Corollary}

\begin{cor}
The probability of failed recovery, $P_{e} \triangleq {\mathbb P}[Z<1]$, for a maximal spreading allocation   is
$P_{e} \leq e^{-2n(\bar p - 1/T)^{2}}$.
When $T>1/\bar p$, $P_{e}\rightarrow 0$, as $n \rightarrow \infty$.
\label{cor:maxspread}
\end{cor}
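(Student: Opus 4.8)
The plan is to chain together the Hoeffding bound from equation~(\ref{bound}) with the result of Lemma~\ref{thm2}. For a maximal spreading allocation we have $x_i = T/n$ for all $i$, so the two quantities appearing in the exponent of~(\ref{bound}) become explicit: $\sum_{i=1}^n x_i p_i = \frac{T}{n}\sum_{i=1}^n p_i = T\bar p$, and $\sum_{i=1}^n x_i^2 = n (T/n)^2 = T^2/n$. Substituting these into~(\ref{bound}), the exponent $-2(\sum x_i p_i - 1)^2 / \sum x_i^2$ becomes $-2(T\bar p - 1)^2 n / T^2 = -2n(\bar p - 1/T)^2$, which gives exactly $P_e \le e^{-2n(\bar p - 1/T)^2}$.

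The one technical point to check is that the Hoeffding bound~(\ref{bound}) is only valid under the hypothesis $\sum_{i=1}^n x_i p_i > 1$, i.e.\ $T\bar p > 1$. This is precisely the assumption $T > 1/\bar p$ in the corollary's second sentence, so for the regime of interest the substitution is legitimate; for the first-sentence bound one simply notes that when $T\bar p \le 1$ the claimed inequality is vacuous or follows by continuity, so the interesting content is entirely in the case $T > 1/\bar p$. Once the bound $P_e \le e^{-2n(\bar p - 1/T)^2}$ is established with $\bar p - 1/T > 0$ a strictly positive constant (independent of $n$), the exponential decay $P_e \to 0$ as $n \to \infty$ is immediate.

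There is no real obstacle here: the corollary is essentially a restatement of Lemma~\ref{thm2} with the exponent made explicit, and the only thing to be careful about is bookkeeping the algebra of plugging $x_i = T/n$ into the sums and confirming the domain condition $T\bar p > 1$ lines up with $T > 1/\bar p$. I would present it as a two-line computation followed by the observation that the exponent is linear in $n$ with a negative coefficient whenever $T > 1/\bar p$.
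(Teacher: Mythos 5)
Your computation is correct and is essentially the paper's own route: the corollary is obtained by substituting the maximal spreading allocation $x_i = T/n$ into the Hoeffding bound (\ref{bound}), giving the exponent $-2n(\bar p - 1/T)^2$, with Lemma~\ref{thm2} supplying the regime $T > 1/\bar p$ in which this is meaningful and decays. One caveat: your aside that for $T\bar p \le 1$ the stated inequality is ``vacuous or follows by continuity'' is not accurate --- if $T < 1/\bar p$, Markov's inequality gives $P_e \ge 1 - T\bar p > 0$ while $e^{-2n(\bar p - 1/T)^2} \to 0$, so the bound can actually fail there; the first sentence of the corollary must simply be read under the standing hypothesis $T \ge 1/\bar p$ (as the paper implicitly does), which does not affect the substance of your argument.
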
 

The fact that ${\cal H}_{n}(\kk p, T, \epsilon)$ contains maximal spreading allocations for $T>1/\bar p$, provides a sufficient condition on the asymptotic optimality of $H1$.

\begin{tm}
Let $\epsilon^{*}$ be the optimal value of $H1$. If $T>1/\bar p$, then $\epsilon^{*} = 
{\cal O}(exp({-n}))$.
\end{tm}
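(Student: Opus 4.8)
The plan is to combine the explicit bound from Corollary~\ref{cor:maxspread} with the fact that $H1$ optimizes over a region that contains maximal spreading allocations once $n$ is large enough. Since $\epsilon^{*}$ is the minimum of $\epsilon$ over all $\kk x$ with $\kk x\in{\cal H}_{n}(\kk p,T,\epsilon)$, any particular feasible allocation yields an upper bound on $\epsilon^{*}$; the maximal spreading allocation ${\bf x}_{T}^{n}$ is the natural candidate.

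First I would recall from Lemma~\ref{thm2} and its proof that ${\bf x}_{T}^{n}\in{\cal H}_{n}(\kk p,T,\epsilon)$ precisely when $n\geq \frac{\ln(1/\epsilon)}{2(\bar p-1/T)^{2}}$, equivalently when $\epsilon\geq e^{-2n(\bar p-1/T)^{2}}$. Thus, setting $\epsilon_{0}=e^{-2n(\bar p-1/T)^{2}}$, the allocation ${\bf x}_{T}^{n}$ is feasible for the feasibility problem defining ${\cal H}_{n}(\kk p,T,\epsilon_{0})$, so the pair $({\bf x}_{T}^{n},\epsilon_{0})$ is feasible for $H1$. By optimality, $\epsilon^{*}\leq\epsilon_{0}=e^{-2n(\bar p-1/T)^{2}}$.

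Finally I would observe that since $T>1/\bar p$ the constant $c\triangleq 2(\bar p-1/T)^{2}$ is strictly positive, so $\epsilon^{*}\leq e^{-cn}={\cal O}(\exp(-n))$, which is the claim. (More precisely one gets the sharper $\epsilon^{*}\le e^{-cn}$ with the explicit constant $c$; the ${\cal O}(\exp(-n))$ statement is the weaker form quoted in the theorem.)

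There is essentially no obstacle here: the theorem is an immediate corollary of Lemma~\ref{thm2}/Corollary~\ref{cor:maxspread} once one notes that $H1$ minimizes over a set containing ${\bf x}_{T}^{n}$. The only thing to be careful about is the direction of the feasibility statement --- that exhibiting one feasible allocation gives an \emph{upper} bound on the minimum $\epsilon^{*}$ --- and the requirement $T>1/\bar p$ which guarantees the exponent's coefficient is bounded away from zero uniformly in $n$.
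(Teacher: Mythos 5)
Your proposal is correct and follows exactly the paper's own argument: exhibit the maximal spreading allocation ${\bf x}_{T}^{n}$ as a feasible point (via Lemma~\ref{thm2}/Corollary~\ref{cor:maxspread}), conclude $\epsilon^{*}\leq e^{-2n(\bar p-1/T)^{2}}$, and use $T>1/\bar p$ to get the exponential rate. No differences worth noting.
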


\begin{proof}
Let $T>1/\bar p$ and consider the maximal spreading allocation ${\bf x}_{T}^{n}$.
Then, $\epsilon^{*}\leq \epsilon_{s}$, where $\epsilon_{s}$ is the minimum $\epsilon$ such that 
${\bf x}_{T}^{n} \in {\cal H}_{n}(\kk p, T, \epsilon)$.
That is $\epsilon_{s} =  e^{-2n(\bar p - 1/T)^{2}}$,
and since $T>1/\bar p$, $\epsilon^{*}\leq \epsilon_{s} = {\cal O}(exp(-n))$.
\end{proof}

\section{Chernoff Relaxation}
In this section we  take a different approach to obtain a tractable convex relaxation 
for $Q1$ by minimizing an appropriate Chernoff upper bound.

\subsection{Upper Bounding the Objective Function}

\begin{lemma1}(Upper Bound)
Let $Z = \sum_{i=1}^{n}x_{i}Y_{i}$, $x_{i}\geq 0$, $Y_{i} \sim bernoulli(p_{i})$ and $t\geq0$. The probability of failed recovery, 
${\mathbb P}[Z < 1]$, is upper bounded by 
$${\mathbb P}[Z < 1] \leq g_{t}(\kk x)\; =  \sum_{\kk r \subseteq \{1, \dots, n\}}{\mathbb P}(\kk r)
{\mbox{ $\exp$}}\left\{-t \left(  \sum_{i\in \kk r} x_{i}  - 1\right)    \right\}\;\;$$
\label{relax}
\end{lemma1}

\begin{proof}
For any $t\geq0$ we have:
\begin{eqnarray}
\;\;\;{\mathbb P}[Z<1]&\leq& {\mathbb P}[Z \leq 1] \; =\; {\mathbb P}\left[e^{-tZ} \geq e^{-t}\right] \nonumber \\
&\leq& e^{t}{\mathbb E}\left[e^{-tZ} \right] \;=\; e^{t}{\mathbb E}\left[\prod_{i=1}^{n} e^{-tx_{i}Y_{i}} \right]\nonumber \\
&=& e^{t}\prod_{i=1}^{n}{\mathbb E}\left[ e^{-tx_{i}Y_{i}} \right] \nonumber\\
&=& e^{t}\prod_{i=1}^{n} \left(1-p_{i} + p_{i}e^{-tx_{i}} \right) \label{form1}
\end{eqnarray}
\begin{eqnarray}
&=& e^{t}\sum_{\kk r \subseteq \{1, \dots, n\}}{\mathbb P}(\kk r)\;
{\mbox{ $\exp$}}\left\{-t \left(  \sum_{i\in \kk r} x_{i} \right)    \right\}\nonumber \\
&=& \sum_{\kk r \subseteq \{1, \dots, n\}}{\mathbb P}(\kk r)\;
{\mbox{ $\exp$}}\left\{-t \left(  \sum_{i\in \kk r} x_{i} - 1 \right)    \right\}\;\; \;\;\;\;\;\;\label{form2}
\\&\triangleq& g_{t}(\kk x) \nonumber
\end{eqnarray}
\end{proof}

Note that $g_{t}(\kk x)$ is a weighted sum of convex functions with linear arguments, and hence convex in $\kk x$.  Equation (\ref{form2}) makes the convex relaxation of the objective function apparent:

$$\mbox{\Large $\mathbbm{1}$}\Big\{x < \alpha \Big\} \leq e^{-t(x-\alpha)}, \,\mbox{for any $t\geq0$.}$$
 
\subsection{The Relaxed Optimization Problem} 
Before we move forward and state the relaxed optimization problem, we take a closer look at 
the constraint set $S = \{\kk x\in \mathbb R^{n}_{+} : \kk 1^{T}\kk x \leq T \}$ of the original problem $Q1$. From a 
practical perspective, it should be wasteful to allocate more than one unit of data (filesize) on a single node. 
If the node survives, then the data collector can always recover the file using only one unit of data and hence any 
additional storage does  not help. 
Also, an allocation using less than the available budget cannot 
have larger probability of successful recovery.

In the following lemma, we show that it is sufficient to consider allocations with 
$x_{i} \in [0,1]$ and $\sum_{i=1}^{n}x_{i}=T$.

\begin{lemma1}
For any $\kk x \in S $, 
$\exists \kk x' \in S' = \{\kk x\in \mathbb R^{n}_{+} : \kk 1^{T}\kk x = T,\, x_{i}\leq 1,\, i = 1, \dots, n \}$ 
such that ${\mathbb P}\left[\sum_{i=1}^{n}x_{i}'Y_{i}<1\right] \leq {\mathbb P}\left[\sum_{i=1}^{n}x_{i}Y_{i}<1\right]$.
\label{set}
\end{lemma1}

\begin{proof}
See the long version of this paper \cite{ArXiv_version}.
\end{proof}
 
The relaxed  optimization problem can be formulated as follows.

\begin{equation*}
\begin{aligned}
R1: \;\;\;\;\;\;\;\;\;& \underset{\displaystyle x_{i}}{\text{minimize}} 
&  &  \;\;\;\;\;g_{t}(\kk x)  \\
& \text{subject to:}
& & \;\;\;\;\sum_{i=1}^{n}x_{i} = T  \\
& & &\;\;\;\; x_{i} \in  [0, 1] , \; i = 1, \ldots, n.
\end{aligned}
\end{equation*}

Note that, in general, one would like to minimize $\inf_{t\geq0}\{g_{t}(\kk x)\}$ instead of $g_{t}(\kk x)$ for some    $t\geq0$. 
However, for now, we will let $t$ be a free parameter and carry on with the optimization.

The important drawback of the above formulation hides in the objective function: Although convex, $g_{t}(\kk x)$ has an {\it exponentially long}
description in the number of storage nodes: The sum is still over all subsets $\kk r \subseteq \{1, \dots, n\}$. This can be circumvented if we consider minimizing $\log g_{t}(\kk x)$ instead of $g_{t}(\kk x)$ over the same
set.

\begin{lemma1}
$\log g_{t}(\kk x)$ is convex in $\kk x$.
\label{conv}
\end{lemma1}
\begin{proof}  See the long version of this paper \cite{ArXiv_version}.

%
%
%
%
%
\end{proof}

\begin{lemma1}
For any $t\geq0$
$$\displaystyle \arg \min_{\kk x \in S} g_{t}(\kk x) = \arg \min_{\kk x \in S} \sum_{i=1}^{n} \log \left(  1 + \frac{p_{i}}{1-p_{i}}e^{-tx_{i}}  \right), $$
where $S = \{\kk x\in \mathbb R^{n}_{+} : \kk 1^{T}\kk x \leq T, \kk x \preceq \kk 1 \}$. 
\label{equiv}
\end{lemma1}

\begin{proof} Let $\kk x^{*} = \arg \min_{\kk x \in S} g_{t}(\kk x)$. Then $g_{t}(\kk x^{*})\leq g_{t}(\kk x)$, $\forall \kk x \in S$.
Taking the logarithm on both sides preserves the inequality since $\log (\cdot)$ is strictly increasing. 
Hence, $\log g_{t}(\kk x^{*})\leq \log g_{t}(\kk x)$, $\forall \kk x \in S$ and subtracting $t+ \sum_{i=1}^{n}\log(1-p_{i})$ from both sides yields the desired result and completes the proof.
\end{proof}

In view of Lemmas \ref{conv} and \ref{equiv}, we can solve $R1$ through the following {\it equivalent} 
optimization problem.

\begin{equation*}
\begin{aligned}
R2: \;\;\;\;\;\;\;\;& \underset{\displaystyle x_{i}}{\text{minimize}} 
&  &  \;\;\;\;t+\sum_{i=1}^{n} \log \left(  1 + \frac{p_{i}}{1-p_{i}}e^{-tx_{i}}  \right)  \\
& \text{subject to:}
& & \;\;\;\;\sum_{i=1}^{n}x_{i} = T  \\
& & &\;\;\;\; x_{i} \in [0,1] , \; i = 1, \ldots, n.
\end{aligned}
\end{equation*}
$R2$ is a convex {\it separable} optimization problem with  polynomial size description and 
in terms of complexity, it is ``not much harder'' than linear programming \cite{dorit}.  One can solve 
such problems numerically in a very efficient way using standard, ``off-the-shelf'' algorithms and optimization packages
such as $\mathtt{CVX}$ \cite{cvx}, \cite{gb08}.
 
\subsection{Insights from Optimality Conditions for R2}
   
Here, we move one step further and take the KKT conditions for $R2$ in order to take a 
closer look at the structure of the optimal solutions. Let $r_{i} \triangleq \frac{p_{i}}{1-p_{i}}$.

The Lagrangian for $R2$ is:
\begin{eqnarray}
L(\kk x, \kk u, \kk v, \lambda)&=& \sum_{i=1}^{n}\log \left(  1 + r_{i}e^{-tx_{i}}  \right) \nonumber
+ \lambda \left(\sum_{i=1}^{n}x_{i} - T \right) \\ &&\;\;\;\;\;\;\;\;\;\;\;\;\;\;\;\;\;\;- \sum_{i=1}^{n}u_{i}x_{i}  +\sum_{i=1}^{n}v_{i}(x_{i} - 1) \nonumber
\end{eqnarray}
where $\lambda \in {\mathbb R}$, $\kk u,\kk v \in {\mathbb R}_{+}^{n}$ are the corresponding Lagrange multipliers.
The gradient is given by $\displaystyle \nabla_{x_{i}} L(\kk x,\kk u, \kk v, \lambda) = 
-\frac{r_{i}t}{r_{i} + e^{tx_{i}}} +\lambda - u_{i} +v_{i} \, ,$ and the KKT necessary and sufficient conditions for optimality yield:

\begin{eqnarray}
&&-\frac{r_{i}t}{r_{i} + e^{tx^{*}_{i}}} +\lambda - u_{i} +v_{i} = 0 \, , \; \forall i \\
&&\sum_{i=1}^{n}x^{*}_{i} = T                      \label{sum}\\
&&0\leq x^{*}_{i} \leq 1\, , \; \forall i                 \\
&& \lambda \in {\mathbb R}\, ,\; v_{i}, u_{i} \geq 0  \, , \; \forall i \\
&& v_{i}(x_{i}-1)=0 \,,\; u_{i}x_{i}=0  \, , \; \forall i 
\end{eqnarray}

Using the results from \cite{stefanov}, the optimal solution to $R2$ is given by

\begin{equation}
x_{i}^{*}= \left\{ \begin{array}{ll}
0 & \mbox{if $\frac{r_{i}t}{1+r_{i}} \leq \lambda^{*}$} \\
1 & \mbox{if $\lambda^{*} \leq \frac{r_{i}t}{e^{t}+r_{i}}$} \\
\frac{1}{t}\log\left(\frac{r_{i}t}{\lambda^{*}}-r_{i}\right) & \mbox{if $\frac{r_{i}t}{e^{t}+r_{i}} < \lambda^{*} < \frac{r_{i}t}{1+r_{i}}$}
\end{array}
\right. \\
\label{x}
\end{equation}

where $\lambda^{*}$ is chosen such that Eq.(\ref{sum}) is satisfied, i.e,

\begin{eqnarray} 
\;\;\sum_{i=1}^{n}\frac{1}{t}\log\left(\frac{r_{i}t}{\lambda^{*}}-r_{i}\right)
{\mbox{\Large $\mathbbm{1}$}}\left\{ \lambda^{*} \in  \left(\frac{r_{i}t}{e^{t}+r_{i}}, \frac{r_{i}t}{1+r_{i}}\right) \right\} \nonumber
\\+\sum_{i=1}^{n}{\mbox{\Large $\mathbbm{1}$}}\left\{\lambda^{*} \leq  \frac{r_{i}t}{e^{t}+r_{i}}\right\} 
  = T
\label{lambda}
\end{eqnarray} 

Numerically,  $\lambda^{*}$ 
can be computed via an iterative ${\cal O}(n^{2})$ algorithm described in \cite{stefanov}, 
and hence this approach gives an even more efficient way to solve $R2$.

However, the most important aspect of the above result is that we can use equations 
(\ref{x}), (\ref{lambda}) to obtain closed form solutions for a certain region of 
problem parameters and analyze the performance of the resulting allocations.

\subsection{The choice of parameter $t\geq0$}

It is clear that the optimal solution to $R2$ depends on our choice of  $t\geq 0$.
For example,   
$\frac{r_{i}t}{e^{t}+r_{i}}\rightarrow 0$, $\frac{r_{i}t}{1+r_{i}} \rightarrow\infty$, as $t\rightarrow\infty$ and 
$\displaystyle x_{i}^{*} = \lim_{t\rightarrow \infty}{t^{-1}}\log\left({r_{i}t}/{\lambda^{*}}-r_{i}\right)$, $\forall i$. Equation (\ref{lambda}) yields
$x_{i}^{*} = \frac{T}{n} \,, \; \forall i $ 
and hence the maximal spreading allocation becomes  optimal for $R2$ as $t\rightarrow \infty$. 
Even though this motivates the choice of  
maximal spreading allocations as approximate ``one-shot'' solutions for the original problem $Q1$,  
explicitly tuning the parameter $t$ can provide significantly better approximations.

In order to obtain the tightest bound from Lemma \ref{relax}, we have to jointly minimize the objective in $R2$
with respect to $t\geq0$ and $\kk x$. Towards this end, one can {\it iteratively optimize} $R2$ by fixing the value of one variable ($t$ or $\kk x$) at each step and minimizing over the other. After each iteration the objective function decreases and hence the above procedure converges to a (possibly local) minimum. The above algorithm iteratively tunes the Chernoff bound introduced in this section and produces a minimizing allocation that can serve as an approximate solution to the original problem $Q1$. 

For analytic purposes though, we can choose a value for $t$ as follows.
Recall from Lemma \ref{relax} that ${\mathbb P}[Z<1]\leq g_{t}(\kk x)$ for any $t\geq 0$. 
After taking logarithms, we would like to find a value for $t\geq 0$ that minimizes
$b(t)\triangleq t+\sum_{i=1}^{n}\log(1+r_{i}e^{-tx_{i}})$. Notice that $b(t)$ is a convex function of $t$, with $b(t)>0$, $\forall t\geq 0$, 
$b(0)=\sum_{i=1}^{n}\log(1+r_{i})$ and $\lim_{t\rightarrow\infty}b(t)=\infty$. The slope of $b(t)$ at zero is 
 $b'(0) = 1 - \sum_{i=1}^{n}\frac{r_{i}x_{i}}{1+r_{i}}=1-\sum_{i=1}^{n}p_{i}x_{i} $, which is negative if the allocation is reliable.
 
When $t$ is large, $\log(1+r_{i}e^{-tx_{i}})\approx 0$, whereas for small values of $t$, $\log(1+r_{i}e^{-tx_{i}})\approx -tx_{i}+\log r_{i}$ and hence  $b(t)\approx t + \sum_{i=1}^{n}\max\{-tx_{i}+\log r_{i}, 0\}\geq t + \max\{-\sum_{i=1}^{n}tx_{i}+\log r_{i}, 0\}$.  One way to choose $t$ 
that does not depend on $x_{i}$ is to make  $-\sum_{i=1}^{n}tx_{i}+\log r_{i}=0\; \Rightarrow$ 
$  t = \frac{1}{T}\sum_{i=1}^{n}\log r_{i}$.

\subsection{A closed-form allocation: $\hat{ \kk  x}_{T}^{n}$ }
In view of the above results we provide here a closed form allocation (each $x_{i}$ is given as a function of $\kk p$ and T) that can be used to study the asymptotic performance of $R2$
and serve as a better ``one-shot'' approximate solution to Q1.
  
Let ${\cal E}(\cdot)$  be a shorthand notation for the sample average such that ${\cal E}f(x) = \frac{1}{n}\sum_{i=1}^{n}f(x_{i})$, in order to simplify the expressions.
For the above choice of $t = \frac{1}{T}\sum_{i=1}^{n}\log r_{i} = {n{\cal E}{\log r}}/{T} $, equation (\ref{x}) becomes:

\begin{equation}
x_{i}^{*}= \left\{ \begin{array}{ll}
0 & \mbox{if $\frac{r_{i}}{1+r_{i}}\frac{n{\cal E}{\log r}}{T} \leq \lambda^{*}$} \\
1 & \mbox{if $\lambda^{*} \leq \frac{r_{i}{n{\cal E}{\log r}}/{T}}{e^{{n{\cal E}{\log r}}/{T}}+r_{i}}$} \\
\frac{T}{n{\cal E}{\log r}}\log\left(\frac{nr_{i}{\cal E}{\log r}}{T\lambda^{*}}-r_{i}\right) & \mbox{otherwise}
\end{array}
\right. \\
\label{xt}
\end{equation}

\begin{lemma1}
If $p_{i}>\frac{1}{2}$, $\forall i$ and $T < \frac{n{\cal E}{\log r}}{\log r_{max}}$, 
 ${ r_{max}} = max\{r_{i}\}$,
then $x_{i}^{*} = \frac{T}{n{\cal E}{\log r}}\log r_{i}$, $\forall i$.
\end{lemma1}

\begin{proof}
Assume that $\lambda^{*} \in \left(\frac{r_{i}{n{\cal E}{\log r}}/{T}}{e^{{n{\cal E}{\log r}}/{T}}+r_{i}}, \frac{r_{i}}{1+r_{i}}\frac{n{\cal E}{\log r}}{T}\right)$. Then from Eq.(\ref{lambda}), $\lambda^{*} = \frac{n{\cal E}{\log r}}{2T}$ and $x_{i}^{*} = \frac{T}{n{\cal E}{\log r}}\log r_{i}$. $\lambda^{*}$ is indeed in the required interval 
if $\frac{n{\cal E}{\log r}}{2T} < \frac{r_{i}}{1+r_{i}}\frac{n{\cal E}{\log r}}{T},\, \forall i$ $\Rightarrow$ $r_{i}>1,\, \forall i$ $\Rightarrow$ $p_{i}>1/2,\, \forall i$ and 
$\frac{n{\cal E}{\log r}}{2T} < \frac{r_{i}{n{\cal E}{\log r}}/{T}}{e^{{n{\cal E}{\log r}}/{T}}+r_{i}},\, \forall i$ $\Rightarrow$ 
$r_{i}< e^{n{\cal E}{\log r}}/T,\, \forall i$ $\Rightarrow$ $T < \frac{n{\cal E}{\log r}}{\log r_{max}}$.
\end{proof}

Clearly, when all $p_{i}>1/2$, $\hat{ \kk  x}_{T}^{n}$: $x_{i} = \frac{T}{n{\cal E}{\log r}}\log r_{i}$, $\forall i$, is a feasible suboptimal allocation for $Q1$. It is also suboptimal for $R2$ in general, since solving
$R2$ via the proposed algorithms can only achieve a smaller 
probability of failed recovery. We have $P_{e}\{Q1\} \leq P_{e}\{R2\} \leq 
{\mathbb P}\left\{\sum_{i=1}^{n}\frac{T}{n{\cal E}{\log r}}\log r_{i}Y_{i}<1\right\}$.

In the following lemma we give an upper bound on the probability of failed recovery
for $\hat{ \kk  x}_{T}^{n}$ and establish its asymptotic optimality.

\begin{lemma1}
If $p_{i}>\frac{1}{2}$, $\forall i$ and $\displaystyle T > \frac{{\cal E}{\log r}}{{\cal E}{p \log r}}$, 
the allocation $\hat{ \kk  x}_{T}^{n} : x_{i} = \frac{T}{n{\cal E}{\log r}}\log r_{i},$ $\forall i$, \,is strictly 
reliable, and 
the probability of failed recovery, $P_{e}={\mathbb P}[Z<1]$, is upper bounded by
$$P_{e}\leq 
exp \left\{  -2n \frac{\left({\cal E}{p\log r} - \frac{{\cal E}{\log{r}}}{T}\right)^{2}}{{\cal E}{\log^{2}r}}\right\}$$
and hence, when $\displaystyle T > \frac{{\cal E}{\log r}}{{\cal E}{p \log r}}$, $P_{e}\rightarrow 0$, as $n\rightarrow \infty$.
\label{lem:chern}
\end{lemma1}

\begin{proof}

The proof follows directly from Lemma \ref{lem1} and Equation (\ref{bound}).
\end{proof}

Notice that $\hat{ \kk  x}_{T}^{n}$ is reliable for values of $T$ for which a maximal spreading allocation 
${ \kk  x}_{T}^{n}$ is not, since 
$\frac{1}{\bar p}\geq \frac{{\cal E}{\log r}}{{\cal E}{p \log r}}$, and hence its probability of failed recovery $P_{e}$ goes to zero exponentially fast for smaller values of $T$.

\section{Numerical Experiments}

In this section we evaluate the performance of the proposed 
approximate distributed storage allocations in terms of their probability of failed recovery and plot the corresponding bounds.
In our simulations we consider an ensemble of distributed 
storage systems with $n=100$ nodes, in which the corresponding access probabilities, $p_{i}\sim {\cal U}(0.5, 1)$, are drawn uniformly at random from the interval $(0.5, 1)$.

We consider the following allocations. 
1) {\it Maximal spreading}: $x_{i} = \frac{T}{n}$,  $\forall i$.
2) {\it Chernoff closed-form}: $x_{i} = ({T}/{n{\cal E}{\log r}})\log r_{i},$ $\forall i$.
3) {\it Hoeffding $\epsilon$-optimal}: obtained by solving $H1$.
4) {\it Chernoff iterative}: obtained by solving $R2$ and iteratively tuning the parameter $t$.

Fig.\ref{fig1} shows, in solid lines, the ensemble average probability of failed recovery of each allocation, 
${\mathbb P}\left[\sum_{i=1}^{n}x_{i}Y_{i}<1\right]$, versus the maximum available budget $T$.
In dashed lines, Fig.\ref{fig1} plots the corresponding bounds on $P_{e}$ obtained from Corollary \ref{cor:maxspread}, Lemma \ref{lem:chern} and the objective functions of $H1$, $R1$.

\begin{figure}
\centerline{\includegraphics[width=1.05\columnwidth]{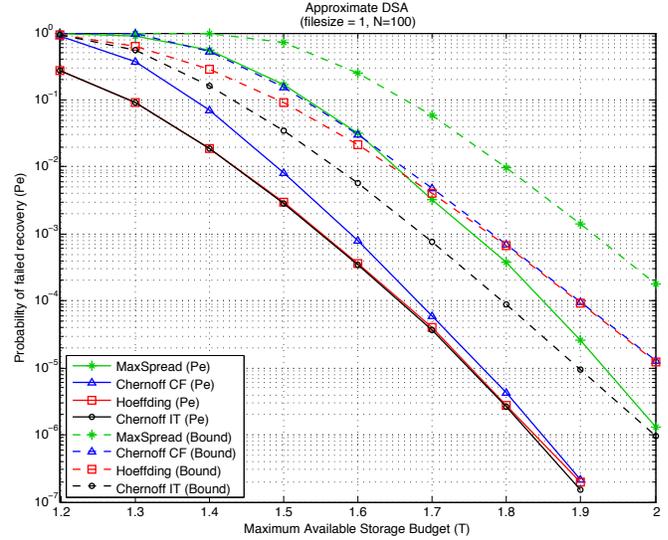}}
\caption{Performance of the proposed approximate distributed storage allocations and their corresponding upper bounds for a system with $n=100$ nodes and $p_{i}\sim {\cal U}(0.5, 1)$.}
\label{fig1}
\end{figure}

%
%
%
%

\bibliographystyle{unsrt}
\bibliography{references}

\end{document}